\newcommand{\PreserveBackslash}[1]{\let\temp=\\#1\let\\=\temp}
\newcolumntype{C}[1]{>{\PreserveBackslash\centering}p{#1}}
\theoremstyle{plain}
\newtheorem{thm}{Theorem}
\theoremstyle{definition}
\newtheorem{rem}{Remark}
\newtheorem{defi}{Definition}
\title{The Toda-Weyl mass spectrum}
\author{Martin T.  Luu}
\date{}
\pgfplotsset{compat=1.17} 
\begin{document}

\maketitle

\begin{abstract} 
The masses of affine Toda theories are known to correspond to the entries of a Perron-Frobenius eigenvector of the relevant Cartan matrix. The Lagrangian of the theory can be expressed in terms of a suitable eigenvector of a Coxeter element in the Weyl group. We generalize this set-up by formulating Lagrangians based on eigenvectors of arbitrary elements in the Weyl group. Under some technical conditions (that hold for many Weyl group elements), we calculate the classical mass spectrum. In particular, we indicate the relation to the relative geometry of special roots, generalizing the affine Toda mass spectrum description in terms of the Cartan matrix.  Related questions of three point coupling and integrability are left to be addressed on a future occasion.
\end{abstract}

\section{Introduction}
\label{introduction-section}

In low temperatures,  below around $2.95$ K,  Cobalt-Niobate $\textrm{Co} \textrm{Nb}_{2} \textrm{O}_6$ is in a magnetically ordered phase,  see \cite{SCH}.  The magnetic structure is dominated by the Cobalt ions $\textrm{Co}^{2+}$ since these possess the only unpaired electrons, having $7$ electrons in the $3d$ orbital. The solid exhibits quasi-$1$D behavior due to the chain geometry in which the Cobalt ions assemble.  Describing the $\textrm{Co} \textrm{O}_{6}$ environment, the oxygen atoms assemble in a slightly deformed hexagonal octahedron around zig-zag chains of Cobalt ions. Schematically:
  
$$
\begin{tikzpicture}
\tikzmath{
\X=0;
\Y=0;
\x=3;
\y=0;
\J=0.58;
\j=(0.48)*\J;
\xhori=2*sin(68.85/2)*\j;
\yhori=2*sin(68.85/2)*\j;
\ydiag=cos(68.85/2)*\j;
\xdiag=\J-sin(68.85/2)*\j;
coordinate \P;
coordinate \Q;
coordinate \R;
coordinate \S;
coordinate \T;
coordinate \U;
coordinate \p;
coordinate \q;
coordinate \r;
coordinate \s;
coordinate \t;
coordinate \u;
\P1=(\X,\Y);
\P2=(\X-1*\xdiag,\Y+\ydiag);
\P3=(\X-1*\xdiag,\Y+\ydiag+\yhori);
\P4=(\X-0*\xdiag,\Y+2*\ydiag+\yhori);
\P5=(\X+1*\xdiag,\Y+1*\ydiag+\yhori);
\P6=(\X+1*\xdiag,\Y+1*\ydiag);
\Q1=(\X-0*\xdiag,\Y+2*\ydiag+2*\yhori);
\Q2=(\X+1*\xdiag,\Y+3*\ydiag+2*\yhori);
\Q3=(\X+2*\xdiag,\Y+2*\ydiag+2*\yhori);
\Q4=(\X+2*\xdiag,\Y+2*\ydiag+1*\yhori);
\R1=(\X-1*\xdiag,\Y+3*\ydiag+2*\yhori);
\R2=(\X-1*\xdiag,\Y+3*\ydiag+3*\yhori);
\R3=(\X-0*\xdiag,\Y+4*\ydiag+3*\yhori);
\R4=(\X+1*\xdiag,\Y+3*\ydiag+3*\yhori);
\S1=(\X-0*\xdiag,\Y+4*\ydiag+4*\yhori);
\S2=(\X+1*\xdiag,\Y+5*\ydiag+4*\yhori);
\S3=(\X+2*\xdiag,\Y+4*\ydiag+4*\yhori);
\S4=(\X+2*\xdiag,\Y+4*\ydiag+3*\yhori);
\T1=(\X-1*\xdiag,\Y+5*\ydiag+4*\yhori);
\T2=(\X-1*\xdiag,\Y+5*\ydiag+5*\yhori);
\T3=(\X-0*\xdiag,\Y+6*\ydiag+5*\yhori);
\T4=(\X+1*\xdiag,\Y+5*\ydiag+5*\yhori);
\U1=(\X-0*\xdiag,\Y+6*\ydiag+6*\yhori);
\U2=(\X+1*\xdiag,\Y+7*\ydiag+6*\yhori);
\U3=(\X+2*\xdiag,\Y+6*\ydiag+6*\yhori);
\U4=(\X+2*\xdiag,\Y+6*\ydiag+5*\yhori);
\p1=(\x,\y);
\p2=(\x-1*\xdiag,\y+\ydiag);
\p3=(\x-1*\xdiag,\y+\ydiag+\yhori);
\p4=(\x-0*\xdiag,\y+2*\ydiag+\yhori);
\p5=(\x+1*\xdiag,\y+1*\ydiag+\yhori);
\p6=(\x+1*\xdiag,\y+1*\ydiag);
\q1=(\x-0*\xdiag,\y+2*\ydiag+2*\yhori);
\q2=(\x+1*\xdiag,\y+3*\ydiag+2*\yhori);
\q3=(\x+2*\xdiag,\y+2*\ydiag+2*\yhori);
\q4=(\x+2*\xdiag,\y+2*\ydiag+1*\yhori);
\r1=(\x-1*\xdiag,\y+3*\ydiag+2*\yhori);
\r2=(\x-1*\xdiag,\y+3*\ydiag+3*\yhori);
\r3=(\x-0*\xdiag,\y+4*\ydiag+3*\yhori);
\r4=(\x+1*\xdiag,\y+3*\ydiag+3*\yhori);
\s1=(\x-0*\xdiag,\y+4*\ydiag+4*\yhori);
\s2=(\x+1*\xdiag,\y+5*\ydiag+4*\yhori);
\s3=(\x+2*\xdiag,\y+4*\ydiag+4*\yhori);
\s4=(\x+2*\xdiag,\y+4*\ydiag+3*\yhori);
\t1=(\x-1*\xdiag,\y+5*\ydiag+4*\yhori);
\t2=(\x-1*\xdiag,\y+5*\ydiag+5*\yhori);
\t3=(\x-0*\xdiag,\y+6*\ydiag+5*\yhori);
\t4=(\x+1*\xdiag,\y+5*\ydiag+5*\yhori);
\u1=(\x-0*\xdiag,\y+6*\ydiag+6*\yhori);
\u2=(\x+1*\xdiag,\y+7*\ydiag+6*\yhori);
\u3=(\x+2*\xdiag,\y+6*\ydiag+6*\yhori);
\u4=(\x+2*\xdiag,\y+6*\ydiag+5*\yhori);
}
\draw (\P1) -- (\P2);
\draw (\P2) -- (\P3);
\draw (\P3) -- (\P4);
\draw (\P4) -- (\P5);
\draw (\P5) -- (\P6);
\draw (\P6) -- (\P1);

\draw (\P4) -- (\Q1);
\draw (\Q1) -- (\Q2);
\draw (\Q2) -- (\Q3);
\draw (\Q3) -- (\Q4);
\draw (\Q4) -- (\P5);

\draw (\Q1) -- (\R1);
\draw (\R1) -- (\R2);
\draw (\R2) -- (\R3);
\draw (\R3) -- (\R4);
\draw (\R4) -- (\Q2);

\draw (\R3) -- (\S1);
\draw (\S1) -- (\S2);
\draw (\S2) -- (\S3);
\draw (\S3) -- (\S4);
\draw (\S4) -- (\R4);

\draw (\S1) -- (\T1);
\draw (\T1) -- (\T2);
\draw (\T2) -- (\T3);
\draw (\T3) -- (\T4);
\draw (\T4) -- (\S2);

\draw (\T3) -- (\U1);
\draw (\U1) -- (\U2);
\draw (\U2) -- (\U3);
\draw (\U3) -- (\U4);
\draw (\U4) -- (\T4);

\draw (\p1) -- (\p2);
\draw (\p2) -- (\p3);
\draw (\p3) -- (\p4);
\draw (\p4) -- (\p5);
\draw (\p5) -- (\p6);
\draw (\p6) -- (\p1);

\draw (\p4) -- (\q1);
\draw (\q1) -- (\q2);
\draw (\q2) -- (\q3);
\draw (\q3) -- (\q4);
\draw (\q4) -- (\p5);

\draw (\q1) -- (\r1);
\draw (\r1) -- (\r2);
\draw (\r2) -- (\r3);
\draw (\r3) -- (\r4);
\draw (\r4) -- (\q2);

\draw (\r3) -- (\s1);
\draw (\s1) -- (\s2);
\draw (\s2) -- (\s3);
\draw (\s3) -- (\s4);
\draw (\s4) -- (\r4);

\draw (\s1) -- (\t1);
\draw (\t1) -- (\t2);
\draw (\t2) -- (\t3);
\draw (\t3) -- (\t4);
\draw (\t4) -- (\s2);

\draw (\t3) -- (\u1);
\draw (\u1) -- (\u2);
\draw (\u2) -- (\u3);
\draw (\u3) -- (\u4);
\draw (\u4) -- (\t4);

\filldraw[black] (\X,{\Y+\ydiag +(1/2)*\yhori}) circle (1.2pt) node{};
\filldraw[black] (\X+\xdiag,{\Y+2*\ydiag +(3/2)*\yhori}) circle (1.2pt) node{};
\filldraw[black] (\X+0*\xdiag,{\Y+3*\ydiag +(5/2)*\yhori}) circle (1.2pt) node{};
\filldraw[black] (\X+1*\xdiag,{\Y+4*\ydiag +(7/2)*\yhori}) circle (1.2pt) node{};
\filldraw[black] (\X+0*\xdiag,{\Y+5*\ydiag +(9/2)*\yhori}) circle (1.2pt) node{};
\filldraw[black] (\X+1*\xdiag,{\Y+6*\ydiag +(11/2)*\yhori}) circle (1.2pt) node{};

\draw ({\X+(1/2)*\xhori},\Y-2*\ydiag) node{$\vdots$} ;
\draw ({\X+(1/2)*\xhori},\Y+18*\ydiag) node{$\vdots$} ;

\draw (\X-4*\xhori,\Y+8*\ydiag) node{$\cdots$} ;
\draw (\X+16*\xhori,\Y+8*\ydiag) node{$\cdots$} ;

\filldraw[black] (\x,{\y+\ydiag +(1/2)*\yhori}) circle (1.2pt) node{};
\filldraw[black] (\x+\xdiag,{\y+2*\ydiag +(3/2)*\yhori}) circle (1.2pt) node{};
\filldraw[black] (\x+0*\xdiag,{\y+3*\ydiag +(5/2)*\yhori}) circle (1.2pt) node{};
\filldraw[black] (\x+1*\xdiag,{\y+4*\ydiag +(7/2)*\yhori}) circle (1.2pt) node{};
\filldraw[black] (\x+0*\xdiag,{\y+5*\ydiag +(9/2)*\yhori}) circle (1.2pt) node{};
\filldraw[black] (\x+1*\xdiag,{\y+6*\ydiag +(11/2)*\yhori}) circle (1.2pt) node{};

\draw ({\x+(1/2)*\xhori},\y-2*\ydiag) node{$\vdots$} ;
\draw ({\x+(1/2)*\xhori},\y+18*\ydiag) node{$\vdots$} ;

\end{tikzpicture}
$$  

There is a magnetically preferred easy axis at an angle of around $\pm 30^{\circ}$ with the Cobalt chain direction, see \cite{HEI}.  As a consequence, Cobalt-Niobate is well modeled by the $1$D quantum Ising model.  The situation becomes very interesting with an externally applied magnetic field orthogonal to all the Cobalt chain easy axes.  For suitable field strength of the external field,  the 1D transverse field quantum Ising model has a quantum phase transition and is described  by a central charge $1/2$ minimal model conformal field theory.  From a Lie algebraic perspective,  this conformal field theory can be described, see \cite{HM},  as a Toda theory:

Let $\mathfrak g$ be a simple complex finite-dimensional Lie algebra of rank $r$, with a real form $\mathfrak g^{*=1}$ of ``Hermitian operators'',  we recall the definition in Section (\ref{Toda-Weyl-definition-section}).  Let $\mathfrak h$ be a Cartan subalgebra of $\mathfrak g$ and consider a field $\phi : \mathbb{R}^{2} \rightarrow \mathfrak h \cap \mathfrak g^{*=1}$.  Let $R$ be a subset of the roots.  Let $(-,-)$ denote the Killing form and define the Lagrangian
$$\textrm{L}_{R}= \frac{1}{2}(\partial_{\mu} \phi, \partial^{\mu} \phi)  - \sum_{\alpha_i \in R} \exp(\alpha_i \cdot \phi)$$
If $R=\{\alpha_1,\cdots,\alpha_r\}$ is chosen to be a set of simple roots,  one obtains the usual conformal and massless Toda theory (we have set all coupling constants to $1$,  for simplicity).  Its quantization, as carried out in \cite{HM}, describes for $\mathfrak g= \mathfrak s \mathfrak l_{2}$ and $\mathfrak g= \mathfrak e_{8}$ the previously mentioned minimal model. 

The actual magnetic structure of $\textrm{Co} \textrm{Nb}_{2} \textrm{O}_6$ incorporates a non-zero, yet small,  interchain coupling.  One obtains a Hamiltonian with an additional term corresponding to magnetization along the easy axis.  This magnetic deformation of the transverse field Ising model has a beautiful Lie algebraic formulation on the level of Toda theory: Replace the set $R$ of simple roots by $R_{\textrm{new}} = R \cup \{ \alpha_{0}\}$ for the lowest root $\alpha_0$.   The resulting affine Toda theory acquires a fascinating mass spectrum, that is known to correspond to the entries of a Perron-Frobenius eigenvector of the Cartan matrix of $\mathfrak g$, see \cite{FLO}. Since in the case $\mathfrak g=\mathfrak e_{8}$ affine Toda theory describes the quantum phase transition of $\textrm{Co} \textrm{Nb}_{2} \textrm{O}_6$ in the presence of a small interchain coupling, this confirms the mass spectrum predicted in groundbreaking work by Zamolodchikov \cite{ZAM}.  Coldea et al.  in \cite{COL} were able to confirm experimentally for $\textrm{Co} \textrm{Nb}_{2} \textrm{O}_6$ some of these predictions using neutron diffraction, in particular that the ratio of second lightest to lightest mass is given by the golden ratio.  As we are about to recall,  affine Toda theory is intimately linked to the Coxeter element of the Weyl group of $\mathfrak g$. The aim of the present work is to generalize these mass spectrum calculations for Lagrangians based on much more general Weyl group elements.  We show that the classical masses can again be expressed in terms of the relative geometry of a special collection of roots,  generalizing the mass expression in terms of the Cartan matrix.

The starting point is Freeman's slight reformulation in \cite{FRE} of affine Toda theory. For $R_{\textrm{new}}$ as before,  rewrite the field as $\phi_{\textrm{min}} + \phi$ where $\phi_{\textrm{min}}$ is a minimum.  The resulting Lagrangian for the new field $\phi$ is  
\begin{eqnarray}
\label{new-Lagrangian}
 \frac{1}{2}(\partial_{\mu} \phi, \partial^{\mu} \phi) - \sum_{\alpha_i \in R_{\textrm{new}}}  n_i \exp(\alpha_i \cdot \phi)
\end{eqnarray}
where $n_0$ is normalized to be $1$ and the $n_i$'s are such that 
\begin{eqnarray}
\label{linear-relation-equation}
\alpha_0=- \sum_{i=1}^{r} n_i \alpha_i
\end{eqnarray} 
See for example \cite{BCD} (Section 2) for details.  Using the Killing form, identify roots as elements of the Cartan subalgebra $\mathfrak h$ of $\mathfrak g$.  For each root $\alpha$ fix a generator $\textrm{E}_{\alpha}$ of the corresponding root space with respect to $\mathfrak h$,  normalized so that 
$$[E_{\alpha},E_{-\alpha}]=\alpha \in \mathfrak h$$
Let 
\begin{align*}
\Lambda_{+} &= \sum_{i=0}^{r} a_i \textrm{E}_{\alpha_i} \\[0.07in]
\Lambda_{-} &=\sum_{i=0}^{r} b_i \textrm{E}_{-\alpha_i}
\end{align*}
where $a_i b_i = n_i$ for each $i$.  Since $\alpha_{0}$ is the lowest root,  $[E_{\alpha_{i}},E_{-\alpha_{j}}]=\delta_{i,j} \alpha_{i}$ for all $0\le i,j \le r$.  Hence, the Lagrangian in Equation (\ref{new-Lagrangian}) can be re-written as
\begin{eqnarray}
\label{crucial-Lagrangian}
 \frac{1}{2}(\partial_{\mu} \phi, \partial^{\mu} \phi)  - (\exp(\textrm{ad } \phi )(\Lambda_{+}) , \Lambda_{-})
\end{eqnarray}
One can show,  see the work of Kostant \cite{KOS} (Section 6),  that the centralizer of $\Lambda_{+}$ is a Cartan algebra,  denote it by $\mathfrak h^{\textrm{Weyl}}$.  It follows from Equation (\ref{linear-relation-equation}) that $\Lambda_{-}$ is in $\mathfrak h^{\textrm{Weyl}}$. The Weyl group $W$ of $\mathfrak g$ acts on this Cartan algebra and Kostant has shown in loc. cit. that there is a Coxeter element $\sigma_{\textrm{Coxeter}}$ in $W$ such that 
$$\sigma_{\textrm{Coxeter}}(\Lambda_{\pm}) = e^{\pm \frac{2 \pi i}{h}} \cdot  \Lambda_{\pm} $$
where $h$ is the Coxeter number.  In this manner, Equation (\ref{crucial-Lagrangian}) allows to formulate affine Toda theory in terms of the linear algebra of Coxeter elements.  We extend this formalism to more general Weyl group elements and calculate the classical mass spectrum.  

For usual affine Toda theory, the masses were calculated in the early 90's by Dorey \cite{DOR}, Freeman \cite{FRE}, and Fring-Liao-Olive \cite{FLO}: They can beautifully be expressed in terms of the Perron-Frobenius eigenvector of the Cartan matrix of $\mathfrak g$.  More recently,  Brillon-Schechtman in  \cite{BS} considered the case of the Coxeter element but with $\Lambda_{+}$ replaced by an arbitrary eigenvector.  Many other works concern various generalizations of Toda theory,  see for example \cite{DGPZ},  \cite{FK1}, \cite{FK2}, \cite{FW}, \cite{GMT}.

\subsection{The Toda-Weyl Lagrangians}
\label{Toda-Weyl-definition-section}

In this section we generalize the affine Toda Lagrangians described by Equation (\ref{crucial-Lagrangian}).  As before,  let $\mathfrak g$ be a simple finite-dimensional complex Lie algebra of rank $r$,  and fix a Cartan subalgebra $\mathfrak h^{\textrm{Weyl}}$.  Let $(-,-)$ denote the Killing form and identify elements in root space as elements of $\mathfrak h^{\textrm{Weyl}}$ by associating to $x$ in $\mathfrak h^{\textrm{Weyl}}$ the functional $(x,-)$.  For every root $\alpha$ consider the endomorphism of $\mathfrak h^{\textrm{Weyl}}$ given by
$$r_{\alpha}(\beta)= \beta - 2 \cdot \frac{(\alpha,\beta)}{(\alpha,\alpha)} \cdot  \alpha$$

The Weyl group $W$ of $\mathfrak g$ is the group generated by the $r_{\alpha}$'s.  For $\sigma$ in $W$ we aim to construct an analogue of the Lagrangian in Equation (\ref{crucial-Lagrangian}). This requires two things: Define the target space of the field $\phi$,  and generalize the special elements $\Lambda_{\pm}$ in the Lie algebra $\mathfrak g$.  

 For each root $\alpha$ choose generators $e_{\pm 
\alpha}$ of the root space of $\pm \alpha$ (with respect to  $\mathfrak h^{\textrm{Weyl}}$), normalized so that $[e_{\alpha},e_{-\alpha}]=
\alpha$.  Let $\mathfrak a$ be the real subalgebra of $\mathfrak g$ given by the 
$\mathbb{R}$-span of the following elements, as $\alpha$ ranges through the set of all roots:
\begin{enumerate}[(i)]
\item
$\alpha$
\item
$e_{\alpha}+e_{-\alpha}$
\item
$i(e_{\alpha}-e_{-\alpha})$
\end{enumerate}
Consider the involution $* : \mathfrak g \rightarrow \mathfrak g$ discussed in detail by Kostant \cite{KOS} (Section 6): It is given by
$$*  : x+ i y \mapsto x-iy$$
where $x$ and $y$ are in $\mathfrak a$. Then $\mathfrak a = \mathfrak g^{*=1}$ can be considered the space of Hermitian operators, it is a real form of the complex algebra $\mathfrak g$. The field $\phi$ will take values in a real subspace of $\mathfrak g^{*=1}$.  To define this subspace, we appeal to results by Kac.

Starting with $\sigma$ and the Cartan algebra $\mathfrak h^{\textrm{Weyl}}$,  we recall how to construct a second Cartan algebra $\mathfrak h^{\textrm{Kac}}$ as well as a gradation $\mathfrak g = \oplus_{k} \mathfrak g_{k}$ (if $\sigma$ is a Coxeter element, the two Cartan algebras $\mathfrak h^{\textrm{Kac}}$ and $\mathfrak h^{\textrm{Weyl}}$ are in apposition, in the terminology of Kostant \cite{KOS}).  Kac showed,  see \cite{KAC} (Theorem 8.6),  that there is a Cartan algebra $\mathfrak h^{\textrm{Kac}}$ and a collection of non-negative integers $\textbf{s}=(s_0,\cdots,s_r)$ with the following properties: 
\begin{enumerate}[(i)]
\item
There is a finite-order inner automorphism $\tilde \sigma$ of $\mathfrak g$ (of order $\tilde n$,  say) such that
\begin{align*}
\tilde \sigma |_{\mathfrak h^{\textrm{Weyl}}} &=\sigma \\[0.1in]
\tilde \sigma |_{\mathfrak h^{\textrm{Kac}}} &= \textrm{id}
\end{align*}
\item 
Let $\zeta =e^{\frac{2\pi i}{\tilde n}}$,  let $\alpha_1,\cdots,\alpha_r$ be a set of simple roots, and let $\alpha_0$ denote the lowest root.  For all $j=0,\cdots,r$
$$\tilde \sigma (\textrm{E}_{\alpha_{j}}) =\zeta^{s_{j}} \cdot \textrm{E}_{\alpha_{j}}$$
where  $\textrm{E}_{\alpha_{j}}$ generates the $\alpha_{j}$ root space with respect to $\mathfrak h^{\textrm{Kac}}$.
\item
The $s_j$'s that are non-zero are co-prime.
\end{enumerate}
The collection of integers $\textbf{s}=(s_0,\cdots,s_r)$ are called Kac coordinates of $\sigma$.  In general,  there are multiple possible coordinates $\textbf{s}$ associated to a given Weyl group element.  In some situations however, for example if $\sigma$ lies in a regular conjugacy class,  they are uniquely defined.  See for example \cite{REE} for more details.  Associate to $\tilde \sigma$ a $\mathbb{Z}/ \tilde n\mathbb{Z}$-gradation on $\mathfrak g$, by letting $\mathfrak g_{k}$ denote the $\tilde \sigma$-eigenspace  with eigenvalue $e^{2\pi i k/\tilde n}$.  The space $\mathfrak g_0$ then contains $\mathfrak h^{\textrm{Kac}}$ but is in general larger.  We can now define the target space of the field $\phi$: We assume 
$$\phi : \mathbb{R}^{2} \longrightarrow \mathfrak g_{0} \cap \mathfrak g^{*=1}$$
Furthermore,  let from now on $\Lambda_{+}$ in $\mathfrak h^{\textrm{Weyl}}$ denote an eigenvector of $\sigma$ with eigenvalue $\mu$, say,  and let $\Lambda_{-}=*(\Lambda_{+})$.  It follows from the definition of $*$ that it maps $\mathfrak h^{\textrm{Weyl}}$ to itself. Hence $\Lambda_{-}$ is again in $\mathfrak h^{\textrm{Weyl}}$ and is in fact an eigenvector of $\sigma$ with complex conjugate eigenvalue compared to $\Lambda_{+}$.  To summarize the situation:

$$\xymatrix{\mathbb{R}^{2} \ar^-{\phi}[r] & \mathfrak g_{0} \cap \mathfrak g^{*=1} \; \ar@{^{(}->}[r] &\mathfrak g_{0} &&& & \sigma (\Lambda_{\pm})  = \mu^{\pm 1} \Lambda_{\pm}  &   \\
&&\mathfrak h^{\textrm{Kac}}   \; \ar@{^{(}->}[]!<0ex,3ex>;[u]!<0ex,0ex>
\ar@{<.>}_-{\textrm{generalized apposition}}[rrrr] && &&  \; \mathfrak h^{\textrm{Weyl}} \supset \{ \Lambda_{+}, \Lambda_{-} \} &
}$$

\begin{defi}
For $\phi$, $\Lambda_{+}$, and $\Lambda_{-}$ as above,  define the Lagrangian exactly as in Equation (\ref{crucial-Lagrangian}) by
\begin{align*}
\textrm{L}_{\textrm{Toda-Weyl}}=  \frac{1}{2}(\partial_{\mu} \phi, \partial^{\mu} \phi)  - (\exp(\textrm{ad } \phi )(\Lambda_{+}) , \Lambda_{-})
\end{align*}
We call this the Toda-Weyl theory.
\end{defi}
Suppose $\sigma$ is a Coxeter element and $\Lambda_{+}$ has eigenvalue $e^{2\pi i /h}$.  The Kac coordinates are known to be $(1,\cdots,1)$ and hence $\mathfrak g_{0} = \mathfrak h_{\textrm{Kac}}$.  If the Cartan algebra $\mathfrak h$ in Section \ref{introduction-section} is identified as $\mathfrak h_{\textrm{Kac}}$,  the Toda-Weyl theory recovers affine Toda theory.

\section{Mass calculations}

For $1\le i \le s$ consider complex-valued scalar fields $\phi_i : \mathbb{R}^{2} \rightarrow \mathbb{C}$.  Let $\phi=(\phi_{1},\cdots,\phi_{s})^{\textrm{T}}$ and consider a Lagrangian density
$$\textrm{L}= \frac{1}{2} \cdot \partial_{\mu} \overline{\phi}^{\textrm{T}}  A  \partial_{\mu}  \phi  + B \phi + \overline{\phi}^{\textrm{T}} C  \phi + 
\cdots $$
where $A$ and $C$ are $s\times s$ matrices and $B$ is of the form  $(b_1,\cdots,b_s)$. Under a field redefinition $\phi \mapsto X \phi$ (with $X$ invertible) one obtains
$$\textrm{L} \mapsto \textrm{L}_{X}= \frac{1}{2} \cdot \partial_{\mu} \overline{\phi}^{\textrm{T}}  (\overline{X}^{\textrm{T}} A X)  \partial_{\mu} \phi  + BX \phi + \overline{\phi}^{\textrm{T}} (\overline{X}^{\textrm{T}} C X)  \phi + 
\cdots $$
Suppose that for some $X$
\begin{eqnarray}
\label{diagonal-Lagrangian}
 \textrm{L}_{X}= \frac{1}{2} \cdot \partial_{\mu} \overline{ \phi}^{\textrm{T}}  \; \partial_{\mu} \phi  - \frac{1}{2} \cdot  \overline{\phi}^{\textrm{T}} \begin{bmatrix}
m_{1}^{2} &&&\\
&m_{2}^{2} &&\\
&&\ddots&\\
&&&m_{s}^{2}
\end{bmatrix}  \phi + \cdots 
\end{eqnarray}
where the $m_{i}$'s are non-negative real numbers. The masses of the Lagrangian are then by definition $m_1,\cdots, m_s$.  If the masses exist,  they are well defined: If $Y$ is another field redefinition such that $\textrm{L}_{Y}$ is as in Equation (\ref{diagonal-Lagrangian}), then $Y=XU$ for a unitary matrix $U$.  Hence the spectrum of $\overline{X}^{\textrm{T}} C X$ agrees with the spectrum of $\overline{Y}^{\textrm{T}} C Y$.  Furthermore, the linear $\phi$ term in $\textrm{L}_{X}$ vanishes if and only if $B=0$ and hence vanishes in $\textrm{L}_{Y}$ as well.

In the remainder of this work we calculate the masses of the Toda-Weyl Lagrangians.  The key idea is to construct a basis of $\mathfrak g_0$ in terms of the root space decomposition with respect to $\mathfrak h^{\textrm{Weyl}}$ (for Coxeter elements such a description goes back to work of Kostant \cite{KOS}).  This allows a description of the masses in terms of pairings $\Lambda_{+} \cdot \alpha := \alpha(\Lambda_{+})$ for suitable roots $\alpha$, viewed as functionals on $\mathfrak h^{\textrm{Weyl}}$. 
 
\begin{thm}
\label{main-theorem}
Let $\mathfrak g$ be a simple finite-dimensional complex Lie algebra with a Cartan subalgebra $\mathfrak h^{\textrm{\emph{Weyl}}}$.  Let $\sigma$ be an element in the Weyl group of $\mathfrak g$ such that:
\begin{enumerate}[{\normalfont (i)}]
\item
$1$ is not an eigenvalue of $\sigma$ acting on $\mathfrak h^{\textrm{\emph{Weyl}}}$.
\item
There is an inner automorphism $\tilde \sigma$ of $\mathfrak g$ such that $\tilde \sigma |_{\mathfrak h^{\textrm{\emph{Weyl}}}} = \sigma$ and $\textrm{\emph{ord }} \tilde \sigma = \textrm{\emph{ord }}  \sigma$. 
\end{enumerate}
Let $\Lambda_{+} \in \mathfrak h^{\textrm{\emph{Weyl}}}$ be an eigenvector of $\sigma$ and let $\Lambda_{-} = *(\Lambda_{+})$.  Let  $\gamma_1,\cdots, \gamma_{s}$ denote orbit representatives for the action of the cyclic group $\langle \sigma \rangle$ on the set of roots.  Then the masses of
$$\textrm{\emph{L}}=\frac{1}{2}(\partial_{\mu} \phi, \partial^{\mu} \phi)  - (\exp(\textrm{\emph{ad }} \phi )(\Lambda_{+}) , \Lambda_{-})$$
are given by
$$m_i = | \Lambda_{+} \cdot \gamma_i|$$
\end{thm}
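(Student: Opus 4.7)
The plan is to expand the potential to quadratic order in $\phi$ and diagonalize the resulting form on $\mathfrak{g}_0\cap\mathfrak{g}^{*=1}$ using a basis adapted to the $\sigma$-orbit decomposition of the roots of $\mathfrak{h}^{\textrm{Weyl}}$. Writing $\exp(\textrm{ad }\phi)\Lambda_+ = \Lambda_+ + [\phi,\Lambda_+] + \tfrac{1}{2}[\phi,[\phi,\Lambda_+]] + \cdots$, the constant term is irrelevant; the linear term $([\phi,\Lambda_+],\Lambda_-) = (\phi,[\Lambda_+,\Lambda_-])$ vanishes because $\Lambda_\pm$ both lie in the abelian $\mathfrak{h}^{\textrm{Weyl}}$; and invariance of the Killing form rewrites the quadratic term as $-\tfrac{1}{2}([\phi,\Lambda_+],[\phi,\Lambda_-])$. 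The mass problem thus reduces to the joint diagonalization of this potential form against the Killing kinetic form on $\mathfrak{g}_0\cap\mathfrak{g}^{*=1}$.

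Next I construct the adapted basis. Since $\tilde\sigma$ sends the root space $\mathfrak{g}_\alpha$ to $\mathfrak{g}_{\sigma(\alpha)}$, every $\sigma$-orbit $[\gamma]=\{\gamma,\sigma\gamma,\ldots,\sigma^{n_\gamma-1}\gamma\}$ spans a $\tilde\sigma$-stable subspace $V_{[\gamma]}$ on which $\tilde\sigma$ acts as a scaled cyclic shift. Hypothesis (ii), that $\textrm{ord}(\tilde\sigma)=\textrm{ord}(\sigma)$, controls the product of these shift scalars around the orbit and guarantees that each $V_{[\gamma]}$ contributes precisely one line to $\ker(\tilde\sigma-1)$; let $v_\gamma=\sum_k a_k e_{\sigma^k\gamma}$ be a generator of this line. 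Hypothesis (i) says $\mathfrak{h}^{\textrm{Weyl}}$ has no $\sigma$-fixed vectors, so it contributes nothing to $\mathfrak{g}_0$, and the $v_{\gamma_i}$ for the $s$ orbit representatives span $\mathfrak{g}_0$ over $\mathbb{C}$.

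The third step is the diagonalization itself. Using $\sigma\Lambda_\pm=\mu^{\pm 1}\Lambda_\pm$, a direct computation in the root basis yields $[\Lambda_\pm,v_\gamma]=(\Lambda_\pm\cdot\gamma)\,v_\gamma^{\pm}$ where $v_\gamma^{\pm}\in\mathfrak{g}_{\mu^{\pm 1}}$ are the analogously twisted vectors in $V_{[\gamma]}$. The Killing-orthogonality $(e_\alpha,e_\beta)=0$ unless $\beta=-\alpha$ then restricts the nonvanishing entries of the form $\phi\mapsto -([\phi,\Lambda_+],[\phi,\Lambda_-])$ to the pairing of each orbit $[\gamma]$ with its antipode $[-\gamma]$. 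Combined with $\Lambda_-\cdot\gamma=\overline{\Lambda_+\cdot\gamma}$ (which follows because $\Lambda_-=*(\Lambda_+)$ acts as complex conjugation on the real span of roots in $\mathfrak{h}^{\textrm{Weyl}}$), the nonzero block entries come out proportional to $|\Lambda_+\cdot\gamma|^2$. Passing to a real Hermitian basis of $\mathfrak{g}_0\cap\mathfrak{g}^{*=1}$ and diagonalizing the $1\times 1$ block (when $-\gamma\in[\gamma]$) or $2\times 2$ block (when $-\gamma\notin[\gamma]$) produces the claimed mass-squared spectrum $\{|\Lambda_+\cdot\gamma_i|^2\}$.

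The main obstacle I expect is the bookkeeping in this last step: one must pass from the complex basis $\{v_\gamma\}$ to a genuine real basis of $\mathfrak{g}_0\cap\mathfrak{g}^{*=1}$ while respecting the orbit-antipodal pairing, and the involution $*$ is a \emph{conjugate-linear anti-automorphism} of $\mathfrak{g}$, giving $[\phi,\Lambda_-]=-*[\phi,\Lambda_+]$ and a sign that must be tracked carefully so that the quadratic form ends up properly semidefinite with the right magnitude. Subsidiary to this is the dimension count $\dim_\mathbb{R}(\mathfrak{g}_0\cap\mathfrak{g}^{*=1})=s$, which is where hypothesis (ii) does its real work by ensuring no orbit is wasted and no spurious dimensions appear.
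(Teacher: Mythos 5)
Your proposal is correct and follows essentially the same route as the paper: the same orbit-sum basis of $\mathfrak g_{0}$ built from the root vectors $e_{\sigma^{k}\gamma}$, the same use of hypotheses (i) and (ii) to ensure this basis exhausts $\mathfrak g_{0}$, the same Killing-orthogonality pairing of each orbit with its antipodal orbit, and the same reality argument $\Lambda_{-}\cdot\gamma=\overline{\Lambda_{+}\cdot\gamma}$ leading to the diagonal quadratic form. The only cosmetic difference is that you carry out the key bracket computation $[[A_{j},\Lambda_{+}],\Lambda_{-}]=(\Lambda_{+}\cdot\gamma_{j})(\Lambda_{-}\cdot\gamma_{j})A_{j}$ directly, where the paper cites Brillon--Schechtman.
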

\begin{proof}
Let us show that the absolute values $|\Lambda_{+} \cdot \gamma_{i}|$ are independent of the choice of orbit representatives: Let $\tilde \Lambda_{+} = (\Lambda_{+},-)$ be the element in root space associated to $\Lambda_{+}$.  Recall that $\Lambda_{+}$ is an eigenvector of $\sigma$,  denote the corresponding eigenvalue $\mu$, a suitable root of unity.  It follows that for all $j$
\begin{align*}
|\Lambda_{+} \cdot \sigma^{j} \gamma_{i}|&=| (\sigma^{j} \gamma_{i})(\Lambda_{+})| \\[0.07in]
&=|(\sigma^{j} \gamma_{i}, \tilde \Lambda_{+})|\\[0.07in]
&=|( \gamma_{i},  \sigma^{-j} \tilde \Lambda_{+})| \\[0.07in]
&=|\mu^{-j} ( \gamma_{i}, \tilde \Lambda_{+})| \\[0.07in]
&= |\Lambda_{+} \cdot \gamma_{i}|
\end{align*}
where we have used that the inner product in root space is invariant under the Weyl group action. 

Let $\tilde \sigma$ be a finite-order inner automorphism lifting $\sigma$.  For each root $\alpha$ let $e_{\alpha}$ be a generator of the root space with respect to $\mathfrak h_{\textrm{Weyl}}$. Then $\tilde \sigma e_{\alpha}$ is a generator of the root space of $\sigma(\alpha)$.  By the assumptions of the theorem, we can assume that $\tilde \sigma$ is of the same order as $\sigma$. Hence, one can choose the generators of the root spaces such that for each root $\alpha$ 
\begin{eqnarray}
\label{sigma-equation}
\tilde \sigma e_{\alpha} =e_{\sigma(\alpha)}
\end{eqnarray}
For each orbit $\mathcal O_{i}$ pick a representative $\gamma_i$ and define
$$A_{i}:= \frac{1}{\sqrt{|\mathcal O_{i}|}} \cdot \sum_{j=0}^{|\mathcal O_{i}|-1} e_{\sigma^{j}(\gamma_{i})}$$
From Equation (\ref{sigma-equation}) it follows that $A_{i}$ is fixed by $\tilde \sigma$ and hence lies in $\mathfrak g_0$.  Since the root space generators are linearly independent, so are the $A_{i}$'s.  We have seen that $\tilde \sigma$ permutes the various root spaces, and since $\tilde \sigma|_{\mathfrak h^{\textrm{Weyl}}} = \sigma$ it follows that $\mathfrak g_{0}$ has a basis given by the $A_i$'s together with a basis of $(\mathfrak h^{\textrm{Weyl}})^{\sigma=1}$. By our assumption on $\sigma$,  $(\mathfrak h^{\textrm{Weyl}})^{\sigma=1}=0$ and therefore the $A_{i}$'s are in fact a basis of $\mathfrak g_0$. We now show that in this basis one can read off the masses of the Toda-Weyl Lagrangian.  Write
$$\phi = \sum_{i=1}^{s} \phi_{i} A_i$$
Since $(e_{\alpha},e_{\beta})=0$ unless $\alpha+\beta=0$, one can deduce that
\begin{eqnarray}
\label{AiAj-equation}
(A_i,A_j) = \delta_{i,\pi(j)}
\end{eqnarray}
where $\pi$ is a permutation of the indices such that $-\gamma_{j}$ is in $A_{\pi(j)}$ for all $j$. By \cite{BS} (Theorem 2.4)
\begin{eqnarray*}
[[ A_j , \Lambda_{+}],\Lambda_{-}]&=& (\Lambda_{+}\cdot \gamma_{j}) \cdot (\Lambda_{-} \cdot \gamma_{j}) \cdot A_j
\end{eqnarray*}
Note that by \cite{KOS} (Equation 6.1.1),  for every complex scalar $c$ 
\begin{align}
\label{complex-conjugate-equation}
(c \cdot e_{\alpha})^{*}= \overline{c} \cdot e_{-\alpha}
\end{align}
It follows from the construction of $*$ that for all $x,y$ in $\mathfrak g$
$$(*(x),*(y))= \overline{(x,y)}$$
In particular:
\begin{align*}
\Lambda_{-}\cdot \gamma_{j} &=  *(\Lambda_{+}) \cdot *(\gamma_{j}) \\[0.07in]
&=  \overline{\Lambda_{+} \cdot \gamma_{j}}
\end{align*}
Therefore
\begin{align*}
([ A_i,[A_j , \Lambda_{+}]],\Lambda_{-})&= (A_{i}, [[ A_j , \Lambda_{+}],\Lambda_{-}] ) \\[0.07in]
&=  (\Lambda_{+}\cdot \gamma_{j})\cdot ( \Lambda_{-}\cdot \gamma_{j})(A_i, A_j)\\[0.07in]
&= |\Lambda_{+}\cdot \gamma_{i}|^{2} \cdot \delta_{i,\pi(j)}
\end{align*}
Together with  Equation (\ref{AiAj-equation}),  this implies that the Toda-Weyl Lagrangian density is given by
\begin{align*}
\textrm{L} &=  \frac{1}{2}\cdot \sum_{i=1}^{s} \partial_{\mu} \phi_{i} \partial^{\mu} \phi_{\pi(i)}- (\Lambda_{+},\Lambda_{-}) - \sum_{i=1}^{s} \phi_{i} ([A_{i}, \Lambda_{+}],\Lambda_{-}) - \frac{1}{2!} \sum_{i,j} \phi_{i} \phi_{j} ([ A_i,[A_j , \Lambda_{+}]],\Lambda_{-}) + \textrm{higher order terms}   \\[0.07in]
&=  \frac{1}{2}\cdot \sum_{i} \partial_{\mu} \phi_{i} \partial^{\mu} \phi_{\pi(i)}- (\Lambda_{+},\Lambda_{-}) - \sum_{i=1}^{s} \phi_{i} (A_{i}, [\Lambda_{+},\Lambda_{-}])-\frac{1}{2!} \sum_{i,j} \phi_{i} \phi_{j} |\Lambda_{+}\cdot \gamma_{i}|^{2} \cdot \delta_{i,\pi(j)} + \textrm{higher order terms} 
\end{align*}
To simplify further,  note that by construction the image of $\phi$ is contained in the space fixed by $*$.  Hence
\begin{align*}
\sum_{i=1}^{s} \phi_{i} A_{i} & = * \left (\sum_{i=1}^{s} \phi_{i} A_{i}  \right ) \\[0.07in]
&=\sum_{i=1}^{s} \overline{\phi_{i}} A_{\pi(i)}
\end{align*}
and therefore
$$\phi_{\pi(i)}= \overline{\phi_{i}}$$
for all $i$.  Furthermore $[\Lambda_{+},\Lambda_{-}]=0$.  Therefore,  up to the constant $(\Lambda_{+},\Lambda_{-})$,  the Lagrangian density is given by
$$\textrm{L}= \frac{1}{2}\cdot \sum_{i} \partial_{\mu} \phi_{i} \partial^{\mu} \overline{\phi}_{i}- \frac{1}{2}  \sum_{i} |\phi_{i}|^{2}  |\Lambda_{+}\cdot \gamma_{i}|^{2} + \textrm{ higher order terms} $$
This implies the theorem. 
\end{proof}

\begin{rem}
\label{condition-remark}
The number of Toda-Weyl masses often has a very simple expression. Suppose for example $\sigma$ is a regular Weyl group element such that $1$ is not an eigenvalue.  As discussed by Reeder in \cite{REE} (Proposition 2.2),  since $\sigma$ is regular there indeed exists a lift $\tilde \sigma$ of the same order as $\sigma$.  Hence the conditions of Theorem \ref{main-theorem} are satisfied.  Since $\sigma$ is regular,  it follows from the work of Springer \cite{SPR} (Proposition 4.1) that every orbit has exactly $n=\textrm{ord } \sigma$ elements.  The total number of roots is known to be $hr$ where $h$ is the Coxeter number and $r$ is the rank of $\mathfrak g$.  It follows that the number $s$ of orbits of $\sigma$ satisfies
$$\textrm{ ord } \sigma \cdot s=h \cdot \textrm{rank } \mathfrak g$$
In particular, the number of Toda-Weyl masses is given by 
$$s=\frac{h \cdot \textrm{rank } \mathfrak g}{\textrm{ ord } \sigma}$$
\end{rem}

In Section \ref{first-example} and Section \ref{second-example} we give two illustrative examples of how to use Theorem \ref{main-theorem} to obtain the precise mass spectrum.

\subsection{Example I}
\label{first-example} 

Let $\mathfrak g= \mathfrak e_6$, with simple roots $\alpha_1,\cdots,\alpha_6$ indexed as in \cite{BOURBAKI}.  Consider the Weyl group element
$$\sigma= r_{\alpha_1}r_{\alpha_2}r_{\alpha_5} r_{\alpha_6} r_{\alpha_2+\alpha_4}r_{\alpha_3+\alpha_4} $$
Since the $6$ roots involved in the definition of $\sigma$ are a basis of root space, it follows from work of Carter \cite{CAR} (Lemma 3) that $1$ is not an eigenvalue of $\sigma$.  Furthermore, one can calculate that the eigenvalues are distinct: For $\zeta_{9}=e^{2\pi i/9}$, they are $\zeta_{9},\zeta_{9}^{2},\zeta_{9}^{4},\zeta_{9}^{5}, \zeta_{9}^{7}, \zeta_{9}^{8}$, see for example \cite{BOU} (Table 1).  Hence, by work of Springer \cite{SPR} (Lemma 4.11), the element $\sigma$ is regular. By Remark \ref{condition-remark} it follows that the conditions of Theorem \ref{main-theorem} are satisfied.  We apply the theorem with $\Lambda_{+}$ an eigenvector with eigenvalue $\zeta_{9}$.  Since $\zeta_{9}$ has multiplicity $1$,  up to an overall scaling, the masses are independent of $\Lambda_{+}$.
 
Let $\zeta=e^{2\pi i/36}$ be a primitive $36$'th root of unity.  It has minimal polynomial over $\mathbb{Q}$ given by $x^{12}-x^{6}+1$.  We claim that one can take $\Lambda_{+}$ as 
\begin{align}
\label{exampleI-Lambda}
 \zeta \alpha_1 +(-\zeta^{7}+\zeta^{5}+\zeta)\alpha_2+(-\zeta^{9}+\zeta^3+\zeta)\alpha_3+(-\zeta^{11}-\zeta^{7}+\zeta^{5}+\zeta^{3}+\zeta)\alpha_4 
+(-\zeta^{11}+\zeta^{5}+\zeta^{3})\alpha_{5}+(-\zeta^{11}+\zeta^{5})\alpha_6
\end{align}
To show this, note that the $8$ orbits $\mathcal O_{1},\cdots, \mathcal O_{8}$ of the cyclic group $\langle \sigma \rangle$ acting on the set of roots can be calculated easily:
 
\hspace{0.2in}
  
\def\arraystretch{1.3}

\begin{center}

\begin{tabular}{p{3.3cm}|p{2.5cm}|p{4.8cm}|p{4.1cm}}  
$\mathcal O_{1}$ &$ \mathcal O_{2} $&$ \mathcal O_{3} $&$ \mathcal O_{4}$\\
\hline   
\hline 
&&& \\
$-\alpha_{6}$ &$ - \alpha_{2}-\alpha_{4} $&$ \alpha_{3}-\alpha_{4}$&$ \alpha_{5}$\\

$\alpha_{5}+\alpha_{6} $&$\alpha_{4}+\alpha_{5} $&$ \alpha_{1}+\cdots + \alpha_{5}$&$\alpha_{1}+\alpha_{2}+\alpha_{3}+2\alpha_{4}+2\alpha_{5}+\alpha_{6}$ \\

$\alpha_{1}+\alpha_{2}+\alpha_{3}+2\alpha_{4}+\alpha_{5}$ &$\alpha_{2}+\alpha_{4}+\alpha_{5}+\alpha_{6} $&$ \alpha_{1}+\alpha_{2}+2\alpha_{3}+2\alpha_{4}+2\alpha_{5}+\alpha_{6} $&$ \alpha_{1}+2\alpha_{2}+2\alpha_{3}+3\alpha_{4}+2\alpha_{5}+\alpha_{6}$\\

$\alpha_{2}+\cdots + \alpha_{6}$ &$\alpha_{1}+\alpha_{2}+\alpha_{3}+\alpha_{4} $&$ \alpha_{1}+\alpha_{2}+2\alpha_{3}+3\alpha_{4}+2\alpha_{5}+\alpha_{6} $&$\alpha_{1}+\alpha_{2}+2\alpha_{3}+2\alpha_{4}+\alpha_{5}+\alpha_{6}$ \\

$\alpha_{1}+\alpha_{3}+\alpha_{4} $&$\alpha_{3} $& $\alpha_{2}+\alpha_{3}+2\alpha_{4}+\alpha_{5}+\alpha_{6} $&$\alpha_{3}+\alpha_{4}$\\

$-\alpha_{1}$ &$-\alpha_{2} $&$ -\alpha_{5} $&$-\alpha_{1}-\cdots-\alpha_{5}$ \\

$-\alpha_{2}-\alpha_{3}-\alpha_{4}-\alpha_{5} $&$-\alpha_{1}-\alpha_{3}$&$  -\alpha_{1}-\alpha_{2}-\alpha_{3}-2\alpha_{4}-2\alpha_{5}-\alpha_{6} $&$ -\alpha_{1}-\alpha_{2}-2\alpha_{3}-2\alpha_{4}-2\alpha_{5}-\alpha_{6}$\\

$-\alpha_{1}-\alpha_{3}-\alpha_{4}-\alpha_{5}-\alpha_{6} $&$ -\alpha_{3}-\alpha_{4}-\alpha_{5} $&$ -\alpha_{1}-2\alpha_{2}-2\alpha_{3}-3\alpha_{4}-2\alpha_{5}-\alpha_{6}$&$-\alpha_{1}-\alpha_{2}-2\alpha_{3}-3\alpha_{4}-2\alpha_{5}-\alpha_{6}$\\

$-\alpha_{2}-\alpha_{3}-2\alpha_{4}-\alpha_{5}$ &$ -\alpha_{4}-\alpha_{5}-\alpha_{6}$ &$ -\alpha_{1}-\alpha_{2}-2\alpha_{3}-2\alpha_{4}-\alpha_{5}-\alpha_{6} $&$-\alpha_{2}-\alpha_{3}-2\alpha_{4}-\alpha_{5}-\alpha_{6}$\\
&&&
\end{tabular}

\end{center}

\begin{center}
 
\begin{tabular}{p{3.3cm}|p{2.5cm}|p{4.8cm}|p{4.1cm}}  
$\mathcal O_{5} $&$ \mathcal O_{6}$ &$ \mathcal O_{7} $&$ \mathcal O_{8}$\\
\hline   
\hline 
&&& \\
$\alpha_{2}$&$ \alpha_{1}$&$ \alpha_{4}$ &$\alpha_{1}+\alpha_{3}+\alpha_{4}+\alpha_{5}$\\

$\alpha_{1}+\alpha_{3} $&$\alpha_{2}+\cdots+\alpha_{5} $&$-\alpha_{1}-\alpha_{3}-\alpha_{4}-\alpha_{5} $&$ \alpha_{2}+\alpha_{3}+2\alpha_{4}+2\alpha_{5}+\alpha_{6}$\\

$\alpha_{3}+\alpha_{4}+\alpha_{5}$ &$ \alpha_{1}+\alpha_{3}+\cdots+\alpha_{6} $&$ -\alpha_{2}-\alpha_{3}-2\alpha_{4}-2\alpha_{5}-\alpha_{6}$&$\alpha_{1}+\alpha_{2}+\alpha_{3}+2\alpha_{4}+\alpha_{5}+\alpha_{6}$\\

$\alpha_{4}+\alpha_{5}+\alpha_{6}$&$\alpha_{2}+\alpha_{3}+2\alpha_{4}+\alpha_{5} $&$ -\alpha_{1}-\alpha_{2}-\alpha_{3}-2\alpha_{4}-\alpha_{5}-\alpha_{6} $&$\alpha_{2}+\alpha_{3}+\alpha_{4}$\\

$\alpha_{2}+\alpha_{4} $&$ \alpha_{6}$&$-\alpha_{2}-\alpha_{3}-\alpha_{4}$&$-\alpha_{2}-\alpha_{4}-\alpha_{5}$\\

$-\alpha_{4}-\alpha_{5} $&$-\alpha_{5}-\alpha_{6} $&$\alpha_{2}+\alpha_{4}+\alpha_{5} $&$-\alpha_{1}-\cdots-\alpha_{6}$\\

$-\alpha_{2}-\alpha_{4}-\alpha_{5}-\alpha_{6} $&$ -\alpha_{1}-\alpha_{2}-\alpha_{3}-2\alpha_{4}-\alpha_{5} $&$ \alpha_{1}+\cdots+ \alpha_{6}$&$-\alpha_{1}-\alpha_{2}-2\alpha_{3}-2\alpha_{4}-\alpha_{5}$\\

$-\alpha_{1}-\alpha_{2}-\alpha_{3}-\alpha_{4} $&$- \alpha_{2}-\cdots -\alpha_{6}$ &$\alpha_{1}+\alpha_{2}+2\alpha_{3}+2\alpha_{4}+\alpha_{5}$&$-\alpha_{3}-\alpha_{4}-\alpha_{5}-\alpha_{6}$\\

$-\alpha_{3}$ &$ -\alpha_{1}-\alpha_{3}-\alpha_{4} $&$\alpha_{3}+\alpha_{4}+\alpha_{5}+\alpha_{6} $&$-\alpha_{4}$
\end{tabular}

\end{center}

\def\arraystretch{1.0}

\hspace{0.2in}

In particular 
\begin{eqnarray*}
\sigma(\alpha_1) &=&\alpha_2 + \alpha_3 + \alpha_4 +\alpha_5\\
\sigma(\alpha_2) &=& \alpha_1+\alpha_3\\
\sigma(\alpha_3)&=&-\alpha_2\\
\sigma(\alpha_4)&=&-\alpha_{1}-\alpha_{3}-\alpha_{4}-\alpha_{5}\\
\sigma(\alpha_5) &=& \alpha_1+\alpha_2+ \alpha_3+2\alpha_4+2\alpha_5 + \alpha_6\\
\sigma(\alpha_6) &=& -\alpha_5-\alpha_6
\end{eqnarray*}
It follows that
$$\sigma(\Lambda_{+}) = \zeta^{4} \cdot \Lambda_{+}=\zeta_{9} \cdot \Lambda_{+}$$
and $\Lambda_{+}$ is indeed the desired eigenvector.  Orbit representatives for the $\sigma$ action on the set of roots are given by 
$$\alpha_1 \; , \;  \alpha_2 \; , \; \alpha_3\; ,  \; \alpha_4 \; ,  \; \alpha_5 \; ,  \; -\alpha_1 \; ,  \; -\alpha_4 \; ,  \; -\alpha_5$$
By Theorem \ref{main-theorem} one obtains the following $8$ masses:
\begin{align*}
|\Lambda_{+} \cdot \pm \alpha_1 |& =|\zeta^{9}-\zeta^3+\zeta|   \approx   0.684\\[0.07in]
|\Lambda_{+} \cdot \alpha_2| &  =|\zeta^{11}-\zeta^{7}+\zeta^5-\zeta^3+\zeta|  \approx  0.446\\[0.07in]
|\Lambda_{+} \cdot \alpha_3| & =|\zeta^{11}-2\zeta^{9}+\zeta^{7}-\zeta^{5}+\zeta^3|  \approx  0.446\\[0.07in]
|\Lambda_{+} \cdot \pm \alpha_4| & =|-\zeta^{11}+\zeta^{9}-\zeta^7| \approx  0.879\\[0.07in]
|\Lambda_{+} \cdot \pm \alpha_5| & =|\zeta^{7}+\zeta^{3}-\zeta|  \approx  1.286
\end{align*}
Let us normalize the masses so that the lowest mass equals $1$ and let us index these normalized masses as 
$$m_1\le \cdots \le m_{8}$$
In the following calculations we repeatedly exploit that $\zeta^{12}-\zeta^{6}+1=0$.  One has

\begin{align*}
\left|\frac{\zeta^{11}-\zeta^{7}+\zeta^5-\zeta^3+\zeta}{\zeta^{11}-2\zeta^{9}+\zeta^{7}-\zeta^{5}+\zeta^3} \right| &= \left | \zeta^8 - \zeta^2 \right | \\[0.07in]
&=\left| \zeta^{13}\cdot (\zeta^8 - \zeta^2) \right|\\[0.07in]
&=\left|-\zeta^{9}\right |\\[0.07in]
&=1
\end{align*}
Hence $m_1=m_2=1$.  Furthermore 
\begin{align*}
m_3=m_{4} &= \left| \frac{\zeta^{9}-\zeta^{3}+\zeta}{\zeta^{11}-\zeta^{7}+\zeta^{5}-\zeta^{3}+\zeta} \right | \\[0.07in]
&=\left | \zeta^{4}+ \zeta^{-4} \right| \\[0.07in]
&= 2 \cos \left  (\frac{2 \pi}{9} \right ) 
\end{align*}
\begin{align*}
 m_5=m_{6} &= \left| \frac{-\zeta^{11}+\zeta^{9}-\zeta^{7}}{\zeta^{11}-\zeta^{7}+\zeta^{5}-\zeta^{3}+\zeta} \right |  \\[0.07in]
&=\left | -\zeta^{2}-1 \right| \\[0.07in]
&=\left | -\zeta^{17}\cdot(\zeta^{2}+1) \right| \\[0.07in]
&=\left|\zeta^{1}+\zeta^{-1}\right |\\[0.07in]
&= 2\cos \left (\frac{\pi}{18}\right ) 
\end{align*}
\begin{align*}
m_{7}=m_{8} &= \left| \frac{\zeta^{7}+\zeta^{3}-\zeta}{\zeta^{11}-\zeta^{7}+\zeta^{5}-\zeta^{3}+\zeta} \right | \\[0.07in]
& =\left | -\zeta^{11} + \zeta^5 + \zeta^3 + \zeta \right| \\[0.07in]
& = \left | \zeta\cdot (\zeta^{2}+\zeta^{-2})\cdot (\zeta^{4}+\zeta^{-4}) \right| \\[0.07in]
& = 4\cos \left (\frac{\pi}{9}\right ) \cos \left (\frac{2 \pi}{9}\right)
\end{align*}
Recall that if $\sigma_{\textrm{Coxeter}}$ is a Coxeter element, the mass spectrum corresponds to the Perron-Frobenius eigenvector of the Cartan matrix. In particular, the mass spectrum has elegant trigonometric expressions.  We have shown that similar formulas hold for our choice of $\sigma$:

\def\arraystretch{1.2}
$$\begin{array}{c|c|c|c}
&\sigma_{\textrm{Coxeter}} & \sigma & \\
\hline   
\hline
\textrm{order} &$12$& $9$&\\
\hline
 m_1 &1  & 1&   \\
m_2 &1&  1&\\ 
m_3& 2 \cos \left (\frac{3\pi}{12} \right ) & 2\cos(\frac{2\pi}{9})&  \\
m_4&4\cos(\frac{\pi}{12})\cos(\frac{4\pi}{12})& 2\cos(\frac{2\pi}{9}) &\\
m_5&4\cos(\frac{\pi}{12})\cos(\frac{4\pi}{12}) &  2\cos(\frac{\pi}{18}) \\
m_6&4\cos\left (\frac{\pi}{12} \right ) \cos\left (\frac{3\pi}{12} \right ) &  2\cos(\frac{\pi}{18})  \\
m_7&&4\cos(\frac{\pi}{9})\cos(\frac{2\pi}{9})\\
m_8&&4\cos(\frac{\pi}{9})\cos(\frac{2\pi}{9})
\end{array}
$$
\def\arraystretch{1.0}
In Table \ref{example1graph} we plot the normalized masses for the Coxeter case,  denoted by $\textrm{E}_6$,  as well as for $\sigma$,  denoted by $\textrm{E}_{6}(\textrm{a}_{1})$ (since this is the conjugacy class of $\sigma$ in the notation of \cite{CAR}). 
\begin{table}[hbt!]
\caption{}
\label{example1graph}
\centering
\begin{tikzpicture}[baseline=(current bounding box.center)]
\begin{axis}[
    title={Mass ratios},
    xlabel={$i$},
    ylabel={$m_i$},
    xmin=0, xmax=9,
    ymin=0, ymax=5,
    xtick={1,2,3,4,5,6,7,8},
    ytick={1,2,3,4},
    legend pos=north west,
    ymajorgrids=true,
    grid style=dashed,
]
\addplot[
	color=blue,
    mark=square,
    ]
    coordinates {
    (1,1)  (2,1)(3,1.532)(4,1.532)(5,1.970)(6,1.970)(7,2.879)(8,2.879)
    };
    \addlegendentry{$\textrm{E}_{6}(\textrm{a}_1)$}
\addplot[
	color=red,
    mark=o,
    ]
    coordinates {
    (1,1)  (2,1)(3,1.414)(4,1.932)(5,1.932)(6,2.732)
    };
   \addlegendentry{$\textrm{E}_{6}$}    
\end{axis}
\end{tikzpicture} \hspace{5mm}
\begin{tabular}{c|c}
$\textrm{E}_{6}$&$\textrm{E}_{6}(a_{1})$\\
\hline \\
1&1\\
1&1\\
1.414...&1.532...\\
1.932...&1.532...\\
1.932...&1.970... \\
2.732...&1.970...\\
&2.879...\\
&2.879...
\end{tabular}
\end{table}

\subsection{Example II}
\label{second-example}
Let $\mathfrak g = \mathfrak f_4$ and let $\alpha_1,\cdots,\alpha_4$ be simple roots, indexed as in \cite{BOURBAKI}. Consider the Weyl group element
$$\sigma=  r_{\alpha_1} r_{\alpha_3+\alpha_4}  r_{\alpha_2} r_{\alpha_1+\alpha_2+\alpha_3} $$ 
It is of order $6$ and by the same argument as in Example I one sees that $1$ is not an eigenvalue of $\sigma$.  For $\zeta_{6}=e^{2\pi i/6}$ the eigenvalues are
$$\zeta_{6}, \zeta_{6}, \zeta_{6}^{5}, \zeta_{6}^{5}$$
See \cite{BOU} (Table 1). The check for regularity is different than in Example I since both eigenvalues occur with multiplicity bigger than $1$.  One approach is to calculate explicitly an eigenvector that is not orthogonal to any root, along the lines of the calculations in the current section.  Instead,  we show in Section \ref{general-theory-section} that $\sigma$ lies in the conjugacy class $\textrm{F}_{4}(a_1)$, which is known to be regular.  Either way, the conditions of Theorem \ref{main-theorem} are satisfied.

Let $\zeta=\zeta_{24}=e^{2\pi i/24}$ be a primitive $24$'th root of unity, its minimal polynomial over $\mathbb{Q}$ is  $x^{8}-x^{4}+1$. We claim that
\begin{eqnarray}
\label{ExampleII-Lambda}
\Lambda_{+}= \zeta \alpha_{1} + (\zeta+\zeta^{-3}) \alpha_{2}+2\zeta \alpha_{3}+2\zeta \alpha_{4}
\end{eqnarray}
is an eigenvector of $\sigma$ with eigenvalue $\zeta_{6}$.  The $8$ orbits $\mathcal O_{1},\cdots,\mathcal O_{8}$ of the action of the cyclic group $\langle \sigma \rangle$ on the set of roots are as follows:  

\hspace{0.2in}

\def\arraystretch{1.3}

\begin{center}

\begin{tabular}{p{3cm}|p{3.5cm}|p{3.5cm}|p{3.5cm}}
$\mathcal O_{1}$ &$ \mathcal O_{2} $&$ \mathcal O_{3} $&$ \mathcal O_{4}$\\
\hline   
\hline 
&&& \\
 $\alpha_1 $&$  \alpha_{2} $&$ \alpha_{3}$&$ \alpha_{4}$\\

$-\alpha_2-2\alpha_3 $&$-\alpha_1-\alpha_2-2\alpha_3-2\alpha_4  $&$ \alpha_{1}+\alpha_{2}+2\alpha_3+\alpha_4 $&$ \alpha_2+\alpha_3+\alpha_4$ \\

$-\alpha_{1}-\alpha_{2}-2\alpha_{3}$&$ -\alpha_1-2\alpha_2-2\alpha_3-2\alpha_4 $&$ \alpha_{1}+\alpha_{2}+\alpha_3+\alpha_4 $&$ \alpha_2+\alpha_3$\\

$-\alpha_1$&$-\alpha_2 $&$-\alpha_3 $&$ -\alpha_4$
 \\

$\alpha_{2}+2\alpha_{3} $&$\alpha_1+\alpha_2+2\alpha_3+2\alpha_4 $  &$ -\alpha_{1}-\alpha_{2}-2\alpha_3-\alpha_4 $&$ -\alpha_2-\alpha_3-\alpha_4$\\

$\alpha_{1} +\alpha_2+2\alpha_3 $&$ \alpha_1+2\alpha_2+2\alpha_3+2\alpha_4   $&$ -\alpha_{1}-\alpha_{2}-\alpha_3-\alpha_4$ &$ - \alpha_2 - \alpha_3$\\

&&&
\end{tabular}

\end{center}

\begin{center}

\begin{tabular}{p{3cm}|p{3.5cm}|p{3.5cm}|p{3.5cm}}
$\mathcal O_{5} $&$ \mathcal O_{6}$ &$ \mathcal O_{7} $&$ \mathcal O_{8}$\\
\hline   
\hline 
&&& \\
$ \alpha_3+\alpha_4 $&$ \alpha_1+ \alpha_{2}+\alpha_3 $&$ \alpha_{1} + \alpha_2 $&$ \alpha_{1}+2\alpha_2+2\alpha_3$\\

$\alpha_1+2\alpha_2+3\alpha_3 +\alpha_4 $&$-\alpha_2-2\alpha_3-\alpha_4 $&$ -\alpha_{1}-2\alpha_{2}-4\alpha_3-2\alpha_4 $&$- \alpha_2-2\alpha_3-2\alpha_4$ \\

$\alpha_{1}+2\alpha_{2}+2\alpha_{3} +\alpha_4$&$ -\alpha_1-2\alpha_2-3\alpha_1-\alpha_4 $
 &$-2\alpha_1-3\alpha_2-4\alpha_3-2\alpha_4 $&$-\alpha_1-3 \alpha_2-4\alpha_3-2\alpha_4$\\

$-\alpha_3-\alpha_4$&$-\alpha_1-\alpha_2-\alpha_3 $&$-\alpha_1-\alpha_2 $&$ -\alpha_1-2\alpha_2-2\alpha_3$
 \\

$-\alpha_1-2\alpha_2-3\alpha_3 -\alpha_4$&$\alpha_2+2\alpha_3+\alpha_4$&$ \alpha_{1}+2\alpha_{2}+4\alpha_3+2\alpha_4 $&$ \alpha_2+2\alpha_3+2\alpha_4$\\

$-\alpha_{1}-2\alpha_{2}-2\alpha_{3} -\alpha_4$&$ \alpha_1+2\alpha_2+3\alpha_3+\alpha_4$   &$ 2\alpha_{1}+3\alpha_{2}+4\alpha_3+2\alpha_4 $&$ \alpha_1+3\alpha_2+4 \alpha_3+2  \alpha_4$\\

&&&
\end{tabular}

\end{center}

\def\arraystretch{1.0}

In particular
\begin{eqnarray*}
\sigma (\alpha_{1}) &=& -\alpha_2-2\alpha_{3}\\
\sigma (\alpha_{2}) &=& -\alpha_{1}-\alpha_{2}-2\alpha_{3} -2\alpha_{4}\\
\sigma (\alpha_{3}) &=&  \alpha_{1}+\alpha_{2}+2\alpha_{3}+\alpha_{4}\\
\sigma (\alpha_{4}) &=& \alpha_{2}+\alpha_{3}+\alpha_{4}
\end{eqnarray*}
It follows that
$$\sigma(\Lambda_{+}) = \zeta^{4}\cdot \Lambda_{+}=\zeta_{6} \cdot \Lambda_{+}$$
as desired.  Orbit representatives for the $\sigma$ action on the set of roots are given by
$$\alpha_1 \; ,  \; \alpha_2 \; , \; \alpha_3 \; , \; \alpha_4 \; , \;  \alpha_3+\alpha_4 \; ,\; \alpha_1+\alpha_2+\alpha_3 \; ,  \; \alpha_1+\alpha_2 \; ,  \; \alpha_1+2\alpha_2+2\alpha_3$$
Using Theorem \ref{main-theorem} one obtains the following $8$ masses, repeatedly using $\zeta^{8}-\zeta^{4}+1=0$:
\begin{align*}
|\Lambda_{+} \cdot \alpha_1 |&= |\zeta-\zeta^{-3}|=|\zeta^{5}|=1 \\[0.07in]
|\Lambda_{+} \cdot \alpha_2| &=|-\zeta+2\zeta^{-3}|=| \zeta^{5}\cdot (-\zeta+2\zeta^{-3})|=|\zeta^{2}+\zeta^{-2}|= 2 \cos\left( \frac{\pi}{6} \right)\\[0.07in]
|\Lambda_{+} \cdot \alpha_3| &=|-\zeta^{-3}|=1\\[0.07in]
|\Lambda_{+} \cdot \alpha_4| &=|\zeta |=1\\[0.07in]
|\Lambda_{+} \cdot (\alpha_3+\alpha_4)|&= |\zeta-\zeta^{-3}| =1 \\[0.07in]
|\Lambda_{+} \cdot (\alpha_1+\alpha_2+\alpha_3)|&= 0\\[0.07in]
|\Lambda_{+} \cdot (\alpha_1+\alpha_2)|&=|\zeta^{-3}|=1\\[0.07in]
|\Lambda_{+} \cdot (\alpha_1+ 2\alpha_2+2\alpha_3)|&=|-\zeta +\zeta^{-3}|=1
\end{align*}
As in Example I, we have shown that the Toda-Weyl mass spectrum for $\sigma$ has trigonometric expressions analogous to those for affine Toda theory:

\def\arraystretch{1.2} 
$$\begin{array}{c|c|c|c}
&\sigma_{\textrm{Coxeter}} & \sigma & \\
\hline   
\hline 
\textrm{order} &$12$& $6$&\\
\hline
m_1 &1   & 0&   \\
m_2 &2\cos\left (\frac{3 \pi}{12} \right )& 1&\\
m_3&2\cos(\frac{\pi}{12}) &  1&  \\
m_4&4\cos(\frac{\pi}{12})\cos(\frac{3 \pi}{12})& 1 &\\
m_5&  &  1\\
m_6& &  1 \\
m_7&&1\\
m_8 &&2   \cos\left (\frac{\pi}{6} \right ) 
\end{array}
$$
\def\arraystretch{1.0}
The masses are plotted in Table \ref{example2graph},  normalized so that the first non-zero mass equals $1$. 

\begin{table}[hbt!]
\caption{}
\label{example2graph}
\centering
\begin{tikzpicture}[baseline=(current bounding box.center)]
\begin{axis}[
    title={},
    xlabel={$i$},
    ylabel={$m_i$},
    xmin=0, xmax=10,
    ymin=0, ymax=5,
    xtick={2,4,6,8,10},
    ytick={1,2,3,4,5},
     legend pos=north west,
    ymajorgrids=true,
    grid style=dashed,
]

\addplot[
	color=red,
    mark=o,
    ]
    coordinates {
     (1,1.00000000000000)(2, 1.41421356237310)(3, 1.93185165257814)(4, 2.73205080756887)  
    };
   \addlegendentry{$\textrm{F}_{4}$}  
      
\addplot[
	color=blue,
    mark=square,
    ]
    coordinates {
 (1,0.00000000000000)(2, 1.0000000000000)(3, 1.0000000000000)(4, 1.0000000000000)(5, 1.0000000000000)(6, 1)(7, 1)(8, 1.73205)
    };
    \addlegendentry{$\textrm{F}_{4}(\textrm{a}_1)$}

\end{axis}
\end{tikzpicture} \hspace{5mm}
\begin{tabular}{c|c}
$\textrm{F}_{4}$&$\textrm{F}_{4}(a_{1})$\\
\hline \\
1&0\\
1.414...&1\\
1.932...&1\\
2.732...&1\\
&1 \\
&1 \\
&1 \\
&1.73205...
\end{tabular}
\end{table}

\subsection{A general theory}
\label{general-theory-section}

As demonstrated in Example I and II,  Theorem  \ref{main-theorem} allows the effective calculation of the Toda-Weyl mass spectrum.  Nonetheless, the meaning of the spectrum might still be open. To this end, recall that in the Coxeter case a crucial insight is the relation to the Perron-Frobenius eigenvector of the Cartan matrix. With some mathematical effort, a corresponding theory can be developed for the Toda-Weyl Lagrangians. The full details will be presented elsewhere, but we describe the approach for the two previously considered examples.

In Example I,  the Weyl group element of $\mathfrak e_6$ is given by
$$\sigma= r_{\alpha_1}r_{\alpha_2}r_{\alpha_5} r_{\alpha_6} r_{\alpha_2+\alpha_4}r_{\alpha_3+\alpha_4} $$
For simple roots, the relative geometry is captured by the Cartan matrix and the Dynkin diagram.  Consider now analogous constructions that capture the geometry of the six roots 
$$\alpha_{1},\alpha_{2},\alpha_{5},  \alpha_{6}, \alpha_{2}+\alpha_{4}, \alpha_{3}+\alpha_{4}$$
involved in the definition of $\sigma$. Order them arbitrarily as $\gamma_1,
\cdots, \gamma_6$ and define the Carter matrix $K$ in complete analogy with 
the Cartan matrix via $K_{i,j} = 2 \cdot  (\gamma_i, \gamma_j)  / (\gamma_j 
, \gamma_j)$.  Now define a graph with vertices corresponding to the $
\gamma_i$'s and the $i$'th and $j$'th vertices are joined by $N_{i,j}$ lines 
where
$$N_{i,j}=K_{i,j}\cdot K_{j,i}$$
This graph generalizes the notion of Dynkin diagram and was introduced in seminal work by Carter in \cite{CAR}, classifying conjugacy classes of Weyl groups. In the present situation one obtains 
$$\xymatrix{
\alpha_2\ar@{-}[r] \ar@{-}[d] &\alpha_3 + \alpha_4  \ar@{-}[r] \ar@{-}[d]  & \alpha_1  \\
\alpha_2+\alpha_4 \ar@{-}[r] & \alpha_5   \ar@{-}[r] & \alpha_6
}$$
In the notation of \cite{CAR}, this means that $\sigma$ lies in the conjugacy class $\textrm{E}_{6}(a_1)$.  If we let $\gamma_1=\alpha_1, \gamma_2=\alpha_2, \gamma_3=\alpha_5, \gamma_4=\alpha_6,\gamma_5=\alpha_2+\alpha_4,  \gamma_6=\alpha_3+ \alpha_4$ 
one obtains the corresponding Carter matrix
$$K=\begin{bmatrix}
2&0&0&0&0&-1\\
0&2&0&0&1&-1\\
0&0&2&-1&-1&-1\\
0&0&-1&2 &0&0\\
0&1&-1&0&2&0\\
-1&-1&-1&0&0&2
\end{bmatrix}$$
Let $\zeta=e^{2\pi i/36}$.  A simple calculation yields an eigenvector $(x_1,\cdots,x_6)^{\textrm{T}}$ of $K$ with eigenvalue $\lambda = 2 -(\zeta^2+ \zeta^{-2})$:
$$x_1=1=x_{4}$$
$$x_2=\zeta^{10}-\zeta^8+1=1-2\cos \left (\frac{4 \pi}{9} \right ) =x_{5}$$
$$x_3=-\zeta^{10}+\zeta^{4}+\zeta^{2}=2\cos \left (\frac{\pi}{9} \right )=x_{6}$$
It turns out that the relative geometry of the roots $\gamma_{i}$ allows to write down the desired eigenvector $\Lambda_{+}$ of $\sigma$.  The definition of $\Lambda_{+}$ in Equation (\ref{exampleI-Lambda}) is simply 
$$\Lambda_{+} = \zeta \cdot (x_1 \gamma_1+ x_2 \gamma_2 + x_3 \gamma_3) + \zeta^{-1} \cdot (x_4 \gamma_4+ x_5 \gamma_5+ x_6 \gamma_6)$$
Therefore, the eigenvector of the Carter matrix determines the pairings of $\Lambda_{+}$ with a basis of root space, and hence via Theorem \ref{main-theorem} determines the mass spectrum. 

To make this even more clear, re-scale our choice of eigenvector $(x_1,\cdots,x_6)^{\textrm{T}}$ of $K$ such that the smallest entry is $1$, in other words we divide the original entries by $1-2\cos \left (\frac{4 \pi}{9} \right )$.  Ordered by size, the re-scaled entries are then
$$1,1,2\cos \left( \frac{2 \pi}{9} \right),2\cos \left( \frac{2 \pi}{9} \right) , 4\cos \left( \frac{ \pi}{9} \right) \cos \left( \frac{2\pi}{9}\right ),4\cos \left( \frac{ \pi}{9} \right) \cos \left( \frac{2\pi}{9}\right )$$
So $(\textrm{rank } \mathfrak g)= 6$ out of the $8$ masses of the Toda-Weyl theory associated to $\sigma$ come from the eigenvector of the Carter matrix.  In this manner, the celebrated relation between the affine Toda mass spectrum and the Perron-Frobenius eigenvector of the Cartan matrix is generalized to the element $\sigma$ in the conjugacy class $\textrm{E}_{6}(a_{1})$.

The same phenomenon persists in Example II where we consider the element of the Weyl group of $\mathfrak f_{4}$ given by
$$\sigma=  r_{\alpha_1} r_{\alpha_3+\alpha_4}  r_{\alpha_2} r_{\alpha_1+\alpha_2+\alpha_3} $$ 
The corresponding graph equals
$$\xymatrix @R=1.5pc @C=1.5pc {
\alpha_1\ar@<-0.3ex>@{-}[d] \ar@<0.3ex>@{-}[d]  \ar@{-}[r] & \alpha_2 \ar@<-0.3ex>@{-}[d] \ar@<0.3ex>@{-}[d]   \\
\alpha_1+\alpha_2+\alpha_3 \ar@{-}[r] & \alpha_3+\alpha_{4}
}$$
Let $\gamma_{1}=\alpha_1$, $\gamma_2=\alpha_3+\alpha_4$, $\gamma_3=\alpha_2$, $\gamma_4=\alpha_1+\alpha_2+\alpha_3$. The corresponding Carter matrix is then
$$K=
\begin{bmatrix}
2 &0 &-1 & 2\\
0&2&-1& -1\\
-1&-2&2& 0\\
1 & -1 &0 &2
\end{bmatrix} 
 $$
Note that it is non-symmetric and we now choose a left eigenvector! Let $\zeta=e^{2 \pi i/24}$ and let $\lambda=2-(\zeta^{2}+\zeta^{-2})$.  One possible corresponding left eigenvector of $K$ is given by $(x_{1},x_{2},x_{3},x_{4})^{\textrm{T}}$ with
\begin{align*}
x_{1}&=1\\[0.07in]
x_2&=2\\[0.07in]
x_3&=2\cos \left (\frac{\pi}{6} \right )\\[0.07in]
x_4&=0
\end{align*}
As in Example I, the desired eigenvector $\Lambda_{+}$ of $\sigma$ can be expressed in terms of this data: The choice of $\Lambda_{+}$ in Equation (\ref{ExampleII-Lambda}) is simply
$$\Lambda_{+} = \zeta \cdot (x_1 \gamma_1 + x_2\gamma_2) + \zeta^{-1}\cdot (x_3\gamma_3 + x_4 \gamma_4)$$
To realize part of the mass spectrum in terms of the eigenvector entries $x_1,\cdots,x_4$ requires more care than in example I: In non-simply laced cases an interesting duality occurs, the $\sigma$ eigenvector is expressed in term of a left eigenvector of $K$, whereas the inner products of $\Lambda_{+}$ with the roots $\gamma_i$ are expressed in terms of a right eigenvector. This duality is already present in the classical Coxeter case, see \cite{FLO} (Equation 24).  To make this explicit, let us normalize the root lengths so that $\alpha_1^2=\alpha_2^2=2$ and $\alpha_3^2=\alpha_4^2=1$. Then $\gamma_1^2 = \gamma_3^2 =2$ and $\gamma_2^2=\gamma_4^2=1$ and the right eigenvector corresponding to the left eigenvector $(x_{1},x_{2},x_{3},x_{4})^{\textrm{T}}$ has entries $\gamma_i^2 \cdot x_i$. Its entries are therefore
$$2,2, 4\cos \left( \frac{\pi}{6} \right ),0$$
Hence, after scaling to make the lowest entry $1$, one sees that $(\textrm{rank } \mathfrak g)=4$ of the Toda-Weyl masses calculated in Section \ref{second-example} can be read off from a suitable eigenvector of the Carter matrix.

This relation between the relative geometry of special sets of roots (the ``Carter roots'' $\gamma_{i}$) and eigenvectors of Weyl group elements allows to generalize the mass description of affine Toda theories in terms of the Perron-Frobenius eigenvector of the Cartan matrix. We will describe the full mathematical details elsewhere.

\section{Conclusions}

Starting with the formulation of affine Toda theory in Equation (\ref{crucial-Lagrangian}), we generalized this set-up by considering Lagrangians involving eigenvectors $\Lambda_{+}$ of other Weyl group elements $\sigma$.

Note that the resulting Toda-Weyl theory does not usually have a simple description of the form as in Equation (\ref{new-Lagrangian}): Typically, when the Weyl group element eigenvector is described in terms of root spaces,  it involves two roots $\xi$ and $\nu$ such that $\xi-\nu$ is again a root. This is one of the reasons we simply took the formulation of affine Toda theory in Equation (\ref{crucial-Lagrangian}) as our starting point.

Under some technical conditions on $\sigma$ that are frequently satisfied, we obtained a description of the classical mass spectrum in terms of the pairings of $\Lambda_{+}$ with orbit representatives of the action of $\langle \sigma \rangle$ on the set of roots. After calculating the masses in some illustrative examples,  we sketched in Section \ref{general-theory-section} how one can construct the desired eigenvectors $\Lambda_{+}$ in terms of eigenvectors of generalized Cartan matrices. We will describe the full mathematical details elsewhere.  This relation between the linear algebra of Weyl group elements and matrices describing the relative geometry of special sets of roots generalizes the celebrated description of the affine Toda mass spectrum in terms of a Perron-Frobenius eigenvector of the relevant Cartan matrix.

There are many open questions regarding these Toda-Weyl theories. For example, one should calculate the three point couplings, as is done in usual Toda theory in \cite{FLO}. Another open question concerns the integrability or failure thereof. We do not address this issue here at all but hope to return to it on a future occasion.

$$$$

\noindent
\textbf{Acknowledgments:}

It is a pleasure to thank the referee for comments that improved the exposition.

\hspace{0.2in}

\noindent
\textbf{Declaration of competing interest:}

The author declares to have no known competing financial interests or personal relationships that could have appeared to influence the work reported in this paper.

\hspace{0.2in}

\noindent
\textbf{Data availability:}

No data was used for the research described in the article.


\begin{thebibliography}{99}
\bibitem[1]{BOURBAKI} N. Bourbaki, Groupes et Alg\` ebres de Lie, Chapitres 4, 5 et 6. Hermann, Paris, 1968
\bibitem[2]{BOU} P. Bouwknegt,  Lie algebra automorphisms, the Weyl group, and tables of shift vectors, J. Math. Phys. \textbf{30}, 571 (1989)
\bibitem[3]{BCD} H. W.  Braden,  E.  Corrigan,  P. Dorey, Affine Toda field theory and exact S-matrices, Nucl. Phys. B \textbf{338} (1990),  689-746  
\bibitem[4]{BS} L. Brillon, V. Schechtman, Coxeter element and particle masses, Selecta Math.  \textbf{22} (2016),  2591-2609
\bibitem[5]{CAR} R. W. Carter, Conjugacy classes in the Weyl group, Compositio Math. \textbf{25} (1972),  1-59
\bibitem[6]{COL} R. Coldea,  D. A. Tennant,  E. M. Wheeler,  E. Wawrzynska, D. Prabhakaran,  M. Telling,  K. Habicht,  P. Smeibidl,  K. Kiefer, Quantum criticality in an Ising Chain: Experimental evidence for emergent $\textrm{E}_8$ symmetry, Science \textbf{327} (2010),  177-180
\bibitem[7]{DGPZ} G. W.  Delius,  M.T. Grisaru,  S. Penati,  D.  Zanon, Exact S-matrix and perturbative calculations in affine Toda theories based on Lie superalgebras, Nuclear Phys. B \textbf{359} (1991), 125-167
\bibitem[8]{DOR} P. Dorey, Root systems and purely elastic S-matrices,  Nucl. Phys. B \textbf{351} (1991),  654-676
\bibitem[9]{FRE} M. D. Freeman, On the mass spectrum of affine Toda field theory,  Phys. Lett.  B \textbf{261},  57-61
\bibitem[10]{FK1} A. Fring,  C. Korff: Colour valued scattering matrices,  Phys.  Lett. B \textbf{477} (2000), 380-386
\bibitem[11]{FK2} A. Fring, C. Korff, Affine Toda field theories related to Coxeter groups of noncrystallographic type,  Nuclear Phys.  B \textbf{729} (2005), 361-386
\bibitem[12]{FLO} A.  Fring,  H.  C.  Liao,  D. I.  Olive, The mass spectrum and coupling in affine Toda theories,  Phys.  Lett.  B \textbf{266} (1991),  82-86
\bibitem[13]{FW} A. Fring,  S. Whittington,  Lorentzian Toda field theories, Reviews in Mathematical Physics \textbf{33} (2021): 2150017
\bibitem[14]{GMT} R. Gebert,  S. Mizoguchi,  I. Takeo, Toda field theories associated with hyperbolic Kac-Moody algebra - Painlev\'{e} properties and W algebras, International Journal of Modern Physics A (1996), 5479-5493 
\bibitem[15]{HEI} C. Heid, H. Weitzel, P. Burlet,  M. Bonnet,  W. Gonschorek, T. Vogt, J. Norwig, H. Fuess,  Magnetic phase diagram of $\textrm{CoNb}_{2}\textrm{O}_{6}$: a neutron diffraction study,  J.  Magn.  Magn.  Mater.  \textbf{151} (1995), 123-131
\bibitem[16]{HM} T.  J.  Hollowood,  P. Mansfield, Rational conformal field theories at, and away from, criticality as Toda field theories, Phys. Lett. B \textbf{226} (1989), 73-79
\bibitem[17]{KAC} V. Kac,  Infinite-Dimensional Lie Algebras, 3rd ed.,  Cambridge University Press,  1990
\bibitem[18]{KOS} B. Kostant, The principal three-dimensional subgroup and the Betti numbers of a complex simple Lie group, Amer. J. Math. (1959),  973-1032
\bibitem[19]{REE} M. Reeder, Torsion automorphisms of simple Lie algebras,  L’Enseignement Math. \textbf{56} (2010), 3-47
\bibitem[20]{SCH} W. Scharf et al.,  Magnetic structures of $\textrm{CoNb}_{2}\textrm{O}_{6}$,  J.  Magn.  Magn.  Mater.  \textbf{13} (1979),  121-124
\bibitem[21]{SPR} T.  A. Springer, Regular elements of finite reflection groups, Inventiones Math.  \textbf{25} (1974), 159-198
\bibitem[22]{ZAM} A. B. Zamolodchikov, Integrable field theory from conformal field theory,  Adv.  Stud.  Pure Math (1989),  641-674
\end{thebibliography}
\end{document}